\documentclass{llncs}
\usepackage{amsfonts}
\usepackage{amsmath} 
\usepackage{amssymb}
\usepackage{enumerate}

% Decision Problems 
%\newcommand{\SAT}{\mathrm{SAT}}
%\newcommand{\MAXSAT}{\mathrm{MAXSAT}}
%\newcommand{\MAXTSAT}{\mathrm{MAX3SAT}}

% Proof Commands
% \newcommand{\PROOF1}{\begin{proof}}%{{\noindent\bf{\em Proof:\/}~~}}

\newcommand{\QED}{\end{proof}}

% Miscellaneous

%\newcommand{\doublespace}{\renewcommand{\baselinestretch}{1.5}\normalsize}

%\newcommand{\restr}{\mbox{\raisebox{.5mm}{$\upharpoonright$}}}

% Blackboard Bold Sets
\newcommand{\N}{\mathbb{N}}

\newcommand{\R}{\mathbb{R}}

% Measure and Dimension

% Prediction

% Complexity Classes

% Calligraphy

% Kolmogorov Complexity

\newcommand{\K}{{\mathrm{K}}}

% Reductions

%pair reductions

% Reduction Closures

% Degrees

% Hard and Complete sets

% Stuff from CPED

% \upharpoonright \C}
% \upharpoonright \C}
% \upharpoonright \C}
% \upharpoonright \C}

% Names

%Dimension Entropy Rate

%% \renewcommand{\K}{\mathcal{K}}
\bibliographystyle{plain}

\title{Predictive Complexity and Generalized Entropy Rate of
  Stationary Ergodic Processes}
\author{Mrinalkanti Ghosh \and Satyadev Nandakumar}
\institute{
Department of Computer Science and Engineering,\\
Indian Institute of Technology Kanpur,\\
Kanpur, U.P., India.
}

\begin{document}
\maketitle

\begin{abstract}
In the online prediction framework, we use generalized entropy of to
study the loss rate of predictors when outcomes are drawn according to
stationary ergodic distributions over the binary alphabet. We show
that the notion of generalized entropy of a regular game \cite{KVV04}
is well-defined for stationary ergodic distributions. In proving this,
we obtain new game-theoretic proofs of some classical information
theoretic inequalities. Using Birkhoff's ergodic theorem and
convergence properties of conditional distributions, we prove that a
classical Shannon-McMillan-Breiman theorem holds for a restricted
class of regular games, when no computational constraints are imposed
on the prediction strategies.

If a game is mixable, then there is an optimal aggregating strategy
which loses at most an additive constant when compared to any other
lower semicomputable strategy. The loss incurred by this algorithm on
an infinite sequence of outcomes is called its \emph{predictive
  complexity}. We use our version of Shannon-McMillan-Breiman theorem
to prove that when a restriced regular game has a predictive
complexity, the predictive complexity converges to the generalized
entropy of the game almost everywhere with respect to the stationary
ergodic distribution.
\end{abstract} 

\section{Introduction}
We consider the online prediction question studied by
\cite{VW98},\cite {VV02}, \cite{KVV04}, \cite{Fortnow:PD},
\cite{KVV07} in the setting of a stationary stochastic process. In
this setting, we have a sequence of outcomes $x_0, x_1, \dots$ from a
finite alphabet. A predictor, given the history up to a certain index,
predicts what the next outcome will be. We allow the predictor to
present its prediction as a convex combination which represents the
weight it assigns to each outcome in the alphabet. The game proceeds
by revealing the next outcome, and then asking for the prediction of
the future outcome. For an overview of this area, see
\cite{CBL06}. Independently, Merhav and Feder \cite{MeFe98}, Feder
\cite{Fede91} and Feder et. al. \cite{FeMeGu92} have studied the
question of optimal finite-state predictors with respect to Shannon
entropy, in the setting of stationary Markov Chains. It is known that
the log-loss game characterizes Shannon entropy. The present line of
work generalizes their approach in two ways - first, in considering
loss functions besides log-loss, and second, in considering optimal
processes over stationary ergodic distributions.

A natural question in this context is how well the predictor is doing
as the game progresses. We measure the discrepancy between the actual
outcome and the predicted one, with a \emph{loss function}. This helps
us to ask whether \emph{optimal} predictors exist - those which incur
at most the same loss as as any other predictor on any outcome,
ignoring additive constants. Indeed if such an optimal predictor
exists, we can use its loss rate on a particular sequence of outcomes
to define its inherent \emph{predictability} (see for example,
\cite{VW98}, \cite{VV02}). 

Besides competitive advantage above other predictors, we can also
characterize the performance of an optimal predictor by examining its
expected loss assuming the outcomes are drawn from a particular
distribution. Prior work by Kalnishkan et al. \cite{KVV04} establishes
that if the outcomes are drawn independently according to a Bernoulli
distribution on the alphabet, then the expected loss rate of an
optimal predictor is the \emph{generalized entropy} \cite{GD04} of the
loss function. In this paper, we extend this result to the important
setting of stationary ergodic distributions.

The contributions of our paper are threefold.

\begin{enumerate}
\item First, we show that the generalized entropy rate of a stationary
  ergodic process is well-defined, if the game is \emph{regular}. We
  provide ``game-theoretic'' proofs of classical information-theoretic
  inequalities, giving new intuitive proofs even in the special case
  of the Shannon entropy. This constitues sections 3 and 4 of the
  paper.

\item Second, under a continuity and an integrability constraint, we
  show that optimal strategies exist for regular games.\footnote{There
    is an independent characterization of games with optimal
    strategies in terms of convexity of loss-regions \cite{KVV07}. We
    deal with this approach in the final section of our paper.}  We
  show that the loss rate incurred by such a strategy is the
  generalized entropy rate of the stationary ergodic process. This is
  a Shannon-McMillan-Breiman theorem for generalized entropy. This
  result is new, and we provide a proof using Vitali Convergence. This
  constitutes section 5 of the paper.

%% We also justify the introduction of the integrability constraint by
%% proving a weak converse to the Shannon-McMillan-Breiman theorem.

\item Using the above results, we show that when a game has
  \emph{predictive complexity}, an optimal aggregator algorithm
  attains the entropy rate of the game.

  The proof that the aggregator incurs at most the entropy rate of
  loss crucially uses our Shannon-McMillan-Breiman Theorem.

 The proof that the aggregator incurs at least the entropy rate of
 loss uses some properties of stationary ergodic processes that we
 prove in Sections 3 and 4. This constitutes the final section of the
 paper.
\end{enumerate}

\section{Preliminaries}
As defined in \cite{KVV04}, a game $\mathcal{G}$ is a triple
$(\Sigma,\Gamma,\lambda)$ where $\Sigma$ is a finite alphabet space,
$\Gamma$ is the space of predictions and
$\lambda:\Sigma\times\Gamma\to[0,\infty]$ is the loss function, to be
defined below. We will only consider the binary alphabet in this
paper. 

Intuitively, we model a predictor function which, given the string of
outcomes so far, will predict the next outcome. We consider a slightly
general framework where the predictor does not have to necessarily
predict only one outcome. It is allowed to output a point $(p_0,p_1)
\in \Gamma^2$ (equivalently, a probability vector, where $p_0$ is the
predicted probability that the next bit is 0, and $p_1$, the
probability that the next bit is 1). The game proceeds by revealing
the next outcome. Let this outcome be $b$. The prediction strategy is
said to incur the loss $\lambda(b, (p_0,p_1))$.

As is customary, we adopt the notation $\N$ for the set of natural
numbers, starting from $0$. The set of strings of length $n$ is
denoted $\Sigma^n$. The set of finite binary strings is denoted
$\Sigma^*$ and the set of infinite binary sequences is denoted
$\Sigma^\infty$. For a finite or an infinite sequence $x$, the
notation $x^j_i$ denotes $x_i \dots x_j$. If $x$ is shorter than $n$
bits, $x^{n-1}_0$ denotes $x$ itself. If $x$ is a finite string, and
$\omega$ is a finite string or an infinite sequence, then $x \cdot
\omega$ denotes the result of concatenating $\omega$ to $x$. For each
natural number $i$, let $\Pi^i$ be the class of all functions mapping
$i$-long strings to $\Gamma$.

We call a family of functions $\wp$ a \emph{strategy} if $\forall i\in
\mathbb{N},|\wp \cap \Pi^i|=1$, i.e, there is unique function which
takes an $i$-length string as input and produce a strategy based on
the input. We call that function $\wp^i$. Thus the prediction strategy
is a non-uniform family. We impose no computational constraints until
the final part of the paper. 

\section{Loss functions}

The generalized entropy of a game is defined in terms of convex loss
functions described above. We define the losses incurred by a strategy
on a finite string $w$ of outcomes, as the cumulative loss that it
incurs on each bit of $w$. This follows the definition given in
\cite{KVV04} and \cite{KVV07}. We generalize the notion slightly to
deal with the expected loss that a strategy incurs with respect to a
stationary distribution.

\begin{definition}
\label{loss}The \emph{loss} that a prediction strategy $\wp$, incurs
on a finite 
string $w$ of outcomes is defined to be
$$\text{Loss}(w,\wp)=\sum_{i=0}^{|w|-1}\lambda(w_i,\wp^i(\omega^{i-1}_0))$$
\end{definition}

In order to study when a strategy is better than another, we study the
average loss it incurs, when outcomes are drawn from a stationary
distribution. We consider the strategy which incurs the minimal
expected loss on a particular set, if such a strategy exists. Let
$(\Sigma^\infty,\mathcal{F},P)$ be the probability space where
$\mathcal{F}$ is the Borel $\sigma$-algebra generated by cylinders
$$C_x=\{\omega\in\Omega\;\mid\;x\text{ is a prefix of }\omega\}$$ for
all finite strings $x$. and $P:\mathcal{F}\to [0,1]$ is the probability
measure. 

Let $X = (X_0, X_1, \dots)$ be a sequence of random variables on the
probability space - for each $i \in \N$, $X_i$ maps $\Sigma^\infty$ to
$\R$.  For $k \ge 1$, let $S_k X$ denote the sequence $(X_{k},
X_{k+1}, \dots)$ - that is, $X$ ``shifted left'' $k$ times.

\begin{definition} \cite{Shiryaev}
A sequence of random variables $X$ is \emph{stationary} if the
probabilities of $S_k X$ and $X$ coincide for every $k \ge 1$. That
is, for every Borel set $B$ in the $\sigma$-algebra over $\R^\infty$, 
$$P( X \in B) = P(S_k X \in B).$$
\end{definition}

We could also use the terminology of measure-preserving
transformations to capture stationarity. A transformation $T: \Omega
\to \Omega$ is said to be \emph{measure-preserving} if for every $A
\in \mathcal{F}$, $P(T^{-1}A) = P(A)$. A measure-preserving
transformation is said to be \emph{ergodic} if $T^-1(A) = A$ if and
only if $P(A)$ is either 0 or 1. \cite{Bill65}

The class of stationary processes correspond almost exactly to the
class of probability spaces $(\Omega, \mathcal{F}, P, T)$ where $T:
\Omega \to \Omega$ is a $P$-measure-preserving transformation. For $k
\in \N$, let $T^k$ denote the iterated application of $T$ on itself,
$k$ times. It is easy to see that if $T$ is measure preserving and
$X_0$ is a random variable, then $(X_0, X_0 \circ T, X_0 \circ T^2,
\dots)$ is a stationary sequence. We also have the converse.

\begin{lemma}\cite{Shiryaev}
For every stationary sequence $X$ on a probability space $(\Omega,
\mathcal{F}, P)$, there is a probability space $(\tilde\Omega,
\tilde{\mathcal{F}}, \tilde{P})$, a random variable $\tilde X$ and a
$\tilde{P}$-measure preserving transformation
$\tilde{T}:\tilde{\Omega}\to\tilde{\Omega}$ such that the distribution
of $(\tilde{X_0}, \tilde{X_0}\circ\tilde{T},
\tilde{X_0}\circ\tilde{T}^2, \dots)$ coincides with the distribution
of $X$.
\end{lemma}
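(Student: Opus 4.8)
The plan is to use the canonical coordinate representation: I take the new probability space to be the sequence space carrying the law of $X$ itself, with $\tilde{T}$ the shift operator, so that stationarity of $X$ translates directly into shift-invariance of the induced measure. First I would set $\tilde{\Omega} = \R^\infty$ equipped with $\tilde{\mathcal{F}}$, the product ($=$ Borel) $\sigma$-algebra generated by the cylinder sets. Define the coordinate map $\Phi : \Omega \to \tilde{\Omega}$ by $\Phi(\omega) = (X_0(\omega), X_1(\omega), \dots)$, and let $\tilde{P} = P \circ \Phi^{-1}$ be the pushforward of $P$, i.e. the distribution of the whole sequence $X$ on $\R^\infty$. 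Let $\tilde{X_0} : \tilde{\Omega} \to \R$ be the projection onto the zeroth coordinate, $\tilde{X_0}(y_0, y_1, \dots) = y_0$, and let $\tilde{T} : \tilde{\Omega} \to \tilde{\Omega}$ be the left shift $\tilde{T}(y_0, y_1, y_2, \dots) = (y_1, y_2, \dots)$. Both maps are measurable with respect to the product $\sigma$-algebra.

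Next I would check that the sequence $(\tilde{X_0}, \tilde{X_0} \circ \tilde{T}, \tilde{X_0} \circ \tilde{T}^2, \dots)$ has the same distribution as $X$. Because $\tilde{X_0} \circ \tilde{T}^k$ is exactly the $k$-th coordinate projection $\pi_k(y) = y_k$, this sequence is simply the identity map on $\tilde{\Omega}$. Hence its law under $\tilde{P}$ is $\tilde{P}$ itself, which by the definition of the pushforward is the law of $(X_0, X_1, \dots)$. Since the distribution of a sequence of real random variables is determined by its finite-dimensional marginals, this already gives that the two processes have identical distributions.

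The main work is to show $\tilde{T}$ is $\tilde{P}$-measure preserving, and this is exactly where stationarity is used. It suffices to verify $\tilde{P}(\tilde{T}^{-1} A) = \tilde{P}(A)$ on a generating $\pi$-system, namely the finite cylinders $A = \{y : (y_0, \dots, y_{n-1}) \in B\}$ for Borel $B \subseteq \R^n$. For such an $A$ we have $\tilde{T}^{-1} A = \{y : (y_1, \dots, y_n) \in B\}$, so $\tilde{P}(A) = P((X_0, \dots, X_{n-1}) \in B)$ while $\tilde{P}(\tilde{T}^{-1} A) = P((X_1, \dots, X_n) \in B)$; these agree by applying the stationarity hypothesis with $k = 1$ to the cylinder Borel set over $\R^\infty$ determined by $B$. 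The two probability measures $\tilde{P}$ and $\tilde{P} \circ \tilde{T}^{-1}$ thus agree on the cylinder $\pi$-system, and since this $\pi$-system generates $\tilde{\mathcal{F}}$, the Dynkin $\pi$-$\lambda$ theorem forces them to coincide on all of $\tilde{\mathcal{F}}$.

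I expect the only genuinely delicate point to be this last extension from cylinders to the full product $\sigma$-algebra: one must check that the cylinders form a $\pi$-system (closed under finite intersection, which they are) and that uniqueness of extension applies because $\tilde{P}$ is a probability measure. Everything else — measurability of the shift and of the coordinate projections, and the identification of the law of the coordinate process with that of $X$ — is routine once the canonical model is in place.
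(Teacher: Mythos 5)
Your construction is correct: the canonical coordinate model on $\R^\infty$ with the left shift, the pushforward measure, and the zeroth-coordinate projection, together with the $\pi$-$\lambda$ verification of shift-invariance from stationarity, is exactly the standard argument. The paper itself gives no proof but cites Shiryaev, and this canonical construction is precisely the proof found there, so your proposal matches the intended approach.
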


% Stationary Probabilty Distribution

On an alphabet space, we are interested in the coordinate random
variables $X_i(\omega) = \omega_i\ (i \in \N)$, and any probability
distribution such that $X$ is stationary with respect to it, will be
called a stationary distribution. A probability space with respect to
which the left-shift transformation is ergodic will be called an
\emph{ergodic distribution}.

\begin{definition}
\label{nentropy}
We define the \emph{$n$-step generalized entropy} of the game to be
\begin{align}
\label{eqn:nentropy}
H_n=\displaystyle\inf_{\wp}\sum_{w\in\Sigma^n}P(w)\text{Loss}(w,\wp),
\end{align}
where $(\Sigma^\infty, \mathcal{F}, P)$ is a stationary probability
space. 
\end{definition}

In order to avoid degenerate games (for example, games where the least
expected loss is infinity, precluding any incentive to play the game),
Kalnishkan et al.\cite{KVV04} restricts the game in the following
manner.

\begin{itemize}
\item We restrict $\Gamma$ to be a compact space. For the binary
alphabet space, the prediction space is $[0,1]$. 
\item The loss function $\lambda$ is an extended real-valued convex
  function on $\Sigma \times \Gamma$. We take the discrete topology on
  the alphabet and the standard topology on $[0,1]$. Then $\lambda$ is
  continuous with respect to their product topology.
\item There is a prediction $\gamma \in \Gamma$ such that for every $b
  \in \Sigma$, the inequality $\lambda(b,\gamma) < \infty$ holds. This
  property ensures that the $n$-ary entropy is a finite quantity.
\item If there are $\gamma_0 \in \Gamma$ such that for some $b \in
  \Sigma$, the loss $\lambda(b, \gamma) = \infty$, then there is a
  sequence $\gamma_1, \gamma_2, \dots \to \gamma$ such that for
  each $\gamma_i$, we have $\lambda(b, \gamma_i) < \infty$.
\end{itemize}

A game which obeys these conditions is said to be \emph{regular}. The
last condition is necessary (but not sufficient) to ensure that
predictive complexity exists for the game. We need this property
crucially in Theorems \ref{smb} and \ref{conv_pred_comp}.

The $n$ step generalized entropy is the least expected loss incurred
by any strategy, on $\Sigma^n$.  Since $\Sigma^n$ is a compact space
and $\lambda$ is continuous in both its arguments, the infimum in the
above expression is attained by some strategy. \footnote{The authors
  remark in \cite{KVV04} that such a strategy need not exist for
  $\Sigma^*$.}

\begin{example}
The Log-Loss game: Consider the binary alphabet and predictions be
values in [0,1]. Let $p_0$ and $p_1$ be the probability of the bit 0
and bit 1, respectively.

Suppose we define the loss function by $\lambda(b, \gamma) = - \log
(\mid\overline{b}-\gamma\mid)$, where $b$ is a bit, $\overline{b}$ its
complement, and $\gamma \in [0,1]$. Then the minimal expected loss
over one bit is obtained at $\gamma = p_1$, ensuring that $H(p_1)$ is
the Shannon entropy of the distribution. \hfill (End of Example)
\end{example}

\begin{definition}
  \label{condentropy}
  The \emph{generalized conditional entropy} of $\Sigma^n$ given
  $\Sigma^m$ is defined as
\begin{align*}
H_{n|m}&=\inf_{\wp}\sum_{w\in\Sigma^m}P(w)\;\sum_{x\in\Sigma^n}P\{x\mid
w\}\sum_{i=0}^{m-1}\lambda\left(x_i,\wp^{i+m}(w \cdot
x^{i-1}_0)\right)\\ &=\inf_{\wp}\sum_{wx\in\Sigma^{n+m}}P(wx)
\sum_{i=0}^{m-1}\lambda\left(x_i,\wp^{i+m}(w \cdot x^{i-1}_0)\right)
\end{align*}
\end{definition}

This is an analogue of the definition of conditional Shannon
entropy. The inner term in Definition \ref{condentropy} can also be
expressed as follows.
$$\sum_{i=0}^{m-1}\lambda \left(x_i,\wp^{i+m}(w \cdot
x^{i-1}_0)\right) = \text{Loss}(wx,\wp)-\text{Loss}(w,\wp).$$

When we generalize the theory to handle arbitrary loss functions, we
do lose some ideal properties that Shannon entropy has. The following
theorem states that Shannon entropy is the unique function having
certain ideal properties that we desire in a measure of
information \cite{Khinchin57}. 

\begin{theorem}
\label{kut}
Suppose $F$ is a continuous function mapping $n$-dimensional
probability distributions to $[0,1]$ having the following properties.
\begin{enumerate}
\item For any random variables $A$ and $B$, $F(AB)=F(A) + F(B|A)$.
\item The $n$-dimensional uniform distribution has the largest entropy
  among $n$-dimensional distributions.
\item $F(p_1, p_2, \dots, p_n, 0) = F(p_1, p_2, \dots, p_n).$
\end{enumerate}
Then there is a positive constant $\lambda$ such that for every
$n$-dimensional probability vector $(p_1, \dots, p_n)$,
$H(p_1, p_2, \dots, p_n) = \lambda F(p_1, p_2, \dots, p_n)$.
\end{theorem}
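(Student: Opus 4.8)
The plan is to follow the classical Khinchin--Faddeev route: first pin down the value of $F$ on uniform distributions, then use the chain rule (property~1) to bootstrap up to all \emph{rational} distributions, and finally invoke continuity to reach arbitrary probability vectors. Throughout I read the conditional term as $F(B\mid A)=\sum_a P(A=a)\,F(B\mid A=a)$, the usual Shannon convention.

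First I would isolate the quantity $L(n)=F(1/n,\ldots,1/n)$, the value of $F$ on the uniform distribution over $n$ symbols. Expansibility (property~3) gives $L(n)=F(1/n,\ldots,1/n,0)$, and maximality of the uniform distribution (property~2) then yields $F(1/n,\ldots,1/n,0)\le F(1/(n+1),\ldots,1/(n+1))=L(n+1)$; hence $L$ is nondecreasing. Next, taking $A$ uniform on $m$ symbols and $B$ uniform on $n$ symbols \emph{independently}, the joint variable $AB$ is uniform on $mn$ symbols, and independence forces $F(B\mid A)=F(B)$. The chain rule then gives the Cauchy-type functional equation $L(mn)=L(m)+L(n)$. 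A monotone solution of this equation must have the form $L(n)=\lambda\log n$ for a constant $\lambda\ge 0$ (a standard lemma on multiplicative-to-additive monotone functions), with $\lambda>0$ whenever $F$ is not identically zero.

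The heart of the argument is the extension to rational distributions. Given $(p_1,\ldots,p_n)$ with $p_i=k_i/N$ and $\sum_i k_i=N$, I would partition $N$ equiprobable symbols into $n$ blocks, the $i$-th block containing $k_i$ symbols. Let $A$ record the block index, so $A$ has distribution $(p_1,\ldots,p_n)$, and let $B$ record the position within the block, so $B\mid A=i$ is uniform on $k_i$ symbols; then the pair $AB$ is uniform on $N$ symbols. Applying the chain rule gives
\[
L(N)=F(p_1,\ldots,p_n)+\sum_{i=1}^n p_i\,L(k_i),
\]
and substituting $L(n)=\lambda\log n$ together with $\sum_i p_i=1$ rearranges to $F(p_1,\ldots,p_n)=-\lambda\sum_i p_i\log p_i=\lambda H(p_1,\ldots,p_n)$.

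Finally, since the rationals are dense in the probability simplex and both $F$ and Shannon entropy $H$ are continuous, the identity $F=\lambda H$ extends from rational vectors to all probability vectors, completing the proof. I expect the block-partition step to be the main obstacle: choosing the joint variable $AB$ so that the chain rule collapses an arbitrary rational distribution into uniform pieces is the crucial and least mechanical insight, whereas the monotone-Cauchy lemma and the closing density/continuity argument are comparatively routine.
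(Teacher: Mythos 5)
Your proof is correct, and it is essentially the same argument as the one behind the paper's statement: the paper gives no proof of Theorem~\ref{kut}, deferring to \cite{Khinchin57}, and your route (pin down $L(n)=F(1/n,\dots,1/n)$ via expansibility and maximality, derive the Cauchy equation $L(mn)=L(m)+L(n)$ from independence, conclude $L(n)=\lambda\log n$ by the monotone-Cauchy lemma, extend to rational vectors by the block-partition chain-rule step, and finish by continuity) is precisely the classical Khinchin--Faddeev proof found in that source. The only caveat is inherited from the statement itself rather than from your argument: $F\equiv 0$ satisfies all three hypotheses yet admits no positive $\lambda$ with $H=\lambda F$, and your remark that $\lambda>0$ whenever $F\not\equiv 0$ is the right way to dispose of that degenerate case.
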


With our definition of the cumulative loss, we can establish the chain
rule for generalized entropy.

\begin{lemma}
  \label{chainrule}
For all positive natural numbers $m$ and $n$, we have
$H_{m+n}=H_m+H_{n|m}$.
\end{lemma}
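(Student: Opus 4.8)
The plan is to prove the identity first at the level of a fixed strategy and only afterwards pass to infima. Fix an arbitrary strategy $\wp$. By the identity recorded immediately after Definition~\ref{condentropy}, the inner sum appearing in $H_{n|m}$ equals $\text{Loss}(wx,\wp)-\text{Loss}(w,\wp)$. Accordingly, for each $wx\in\Sigma^{m+n}$ I would write
$$\text{Loss}(wx,\wp) = \text{Loss}(w,\wp) + \big(\text{Loss}(wx,\wp)-\text{Loss}(w,\wp)\big),$$
multiply by $P(wx)$, and sum over $wx\in\Sigma^{m+n}$. In the first resulting summand, $\text{Loss}(w,\wp)$ depends only on the length-$m$ prefix $w$; since the cylinders $C_{wx}$, for $x\in\Sigma^n$, partition $C_w$, we have $\sum_{x\in\Sigma^n}P(wx)=P(w)$, so that summand collapses to $\sum_{w\in\Sigma^m}P(w)\,\text{Loss}(w,\wp)$. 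Writing $A(\wp)$, $B(\wp)$, and $C(\wp)$ for the objective functionals minimized by $H_{m+n}$, $H_m$, and $H_{n|m}$ respectively, this establishes the pointwise identity $A(\wp)=B(\wp)+C(\wp)$ for every strategy $\wp$.

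The lower bound $H_{m+n}\ge H_m+H_{n|m}$ is then immediate from superadditivity of the infimum: for every $\wp$ we have $A(\wp)=B(\wp)+C(\wp)\ge \inf_\wp B(\wp)+\inf_\wp C(\wp)=H_m+H_{n|m}$, and taking the infimum over $\wp$ on the left preserves the inequality.

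For the matching upper bound I would exploit that $B$ and $C$ depend on \emph{disjoint} pieces of a strategy. Recall from Section~2 that a strategy is a family $\{\wp^i\}_{i\in\N}$ whose members are chosen independently for distinct indices $i$. The functional $B(\wp)=\sum_{w\in\Sigma^m}P(w)\,\text{Loss}(w,\wp)$ involves only $\wp^0,\dots,\wp^{m-1}$, whereas $C(\wp)$ involves only $\wp^{m},\dots,\wp^{m+n-1}$. Hence, given strategies $\wp^{(1)}$ and $\wp^{(2)}$ that attain $H_m$ and $H_{n|m}$ respectively (such minimizers exist by the compactness of $\Sigma^{m+n}$ and continuity of $\lambda$ noted after Definition~\ref{nentropy}; alternatively one may use $\varepsilon$-minimizers and let $\varepsilon\to 0$), I would assemble a single strategy $\wp^\star$ that agrees with $\wp^{(1)}$ on the indices $0,\dots,m-1$ and with $\wp^{(2)}$ on the indices $m,\dots,m+n-1$, its values on all remaining indices being irrelevant. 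Then $B(\wp^\star)=B(\wp^{(1)})=H_m$ and $C(\wp^\star)=C(\wp^{(2)})=H_{n|m}$, so the pointwise identity gives $H_{m+n}\le A(\wp^\star)=H_m+H_{n|m}$. Combining the two bounds yields the chain rule.

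The step I expect to require the most care is this decoupling in the upper bound: one must verify that no component function $\wp^i$ is shared between the block of indices governing $B$ and the block governing $C$, so that the two minimizations genuinely separate and the spliced strategy $\wp^\star$ realizes both optimal values simultaneously. This is precisely where the definition of a strategy as an index-wise independent family is essential — without it the identity $A=B+C$ would only deliver the one-sided superadditive bound, and equality could fail.
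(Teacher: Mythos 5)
Your proof is correct and takes essentially the same route as the paper's: both rest on the decomposition $\text{Loss}(wx,\wp)=\text{Loss}(w,\wp)+\bigl(\text{Loss}(wx,\wp)-\text{Loss}(w,\wp)\bigr)$ together with the key observation that the objectives defining $H_m$ and $H_{n|m}$ depend on disjoint blocks $\wp^0,\dots,\wp^{m-1}$ and $\wp^m,\dots,\wp^{m+n-1}$ of the strategy, so the two infima merge into a single infimum. Your version merely makes explicit, as two separate inequalities (superadditivity of the infimum plus splicing of optimizers or $\varepsilon$-optimizers), what the paper compresses into one chain of equalities.
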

\begin{proof}
  In Definition \ref{condentropy}, $\wp^i$ for $0\le i\le m$ does not
  play any role in the infimum and likewise in Definition
  \ref{nentropy}, $\wp^i$ for $i\ge n$ does not play any role in the
  infimum $\inf$. This observation allows us to deduce that
\begin{multline}
    H_m+H_{n/m} =
    \inf_{\wp}\left(
    \sum_{w\in\Sigma^m}P(w)\sum_{x\in\Sigma^n}
    P\{x\mid w\}\sum_{i=0}^{m-1}\lambda\left(x_i,
    \wp^{i+m}(w \cdot x^{i-1}_0)\right)\right)+\\
    \;\inf_{\wp}\sum_{w\in\Sigma^m}P(w)\text{Loss}(w,\wp)\\
%==================
    =\inf_{\wp} \sum_{w\in\Sigma^m}P(w)\left(\sum_{x\in\Sigma^n}
    \sum_{i=0}^{m-1}\lambda\left(x_i, \wp^{i+m}(w \cdot
    x^{i-1}_0)\right)+
    \sum_{w\in\Sigma^m}\text{Loss}(w,\wp)\right).
%%================
\end{multline}
Now, 
\begin{align*}
    &\inf_{\wp}\sum_{w\in\Sigma^m}P(w)\left(\text{Loss}(w,\wp)+
    \sum_{w'\in\Sigma^n}P\{w'\mid w\}\sum_{i=0}^{m-1}
    \lambda(x_i,\wp^{i+m}(w \cdot x^{i-1}_0))\right)\\
%%================
    = & \inf_{\wp}\sum_{w\in\Sigma^m}P(w)
    \sum_{w'\in\Sigma^n}P\{w'\mid w\}
    \left(\text{Loss}(w,\wp)+\sum_{i=0}^{m-1}\lambda(x_i,\wp^{i+m}(w
    \cdot x^{i-1}_0))\right)\\ 
    = & \inf_{\wp}\sum_{w\in\Sigma^{m+n}}P(w)\text{Loss}(w,\wp) =  H_{m+n}.
    \end{align*}
\end{proof}

Since $\lambda$ is non-negative, it is clear that all entropies
defined so far are non-negative. An immediate consequence of this is
$H_{m+n}\ge H_m$ for all $m,n\ge 0$. We see that this style of proof
referring to strategies in games yields new intuitive proofs of such
inequalities. 

Since conditions 1 and 3 in Theorem \ref{kut} are satisfied,
Khinchin's uniqueness theorem therefore leads us to conclude that with
a generalized entropy, the uniform distribution need not have maximal
entropy - for example, the square-loss is not maximized at the uniform
distribution.

\section{Entropy of a Regular Game}
The goal of this section is to define the notion of the entropy of a
regular game. Our idea is to define it to be the limiting rate of the
$n$-step generalized entropies of the game. We now show that if the
game is regular and the probability distribution is stationary, such a
limit exists. Thus the notion of the entropy of a regular game is
well-defined.  

\begin{lemma}
  \label{shannonineq}
[Generalized Shannon Inequality] For any regular game and non-negative
integers $m$ and $n$, we have $H_{m/n}\le H_m$.
\end{lemma}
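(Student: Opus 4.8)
The plan is to exploit the structural difference between the two quantities: $H_{m/n}$ is an infimum over prediction strategies that are permitted to read the length-$n$ conditioning block $w$ before predicting $x$, whereas $H_m$ is an infimum over strategies that must predict the length-$m$ block from scratch. Since access to the prefix can only help, every strategy that is good for the unconditional game can be turned into a conditional strategy of no larger expected loss; taking infima then gives $H_{m/n}\le H_m$. Regularity is used only to guarantee $H_m<\infty$, so that the inequality is not vacuous (when $H_m=\infty$ there is nothing to prove).

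First I would fix an arbitrary strategy $\wp$ for the unconditional game and lift it to a conditional strategy $\tilde\wp$ that simply discards the conditioning block: for $0\le i\le m-1$, $w\in\Sigma^n$ and $y\in\Sigma^i$, set $\tilde\wp^{n+i}(w\cdot y)=\wp^{i}(y)$. (The functions $\tilde\wp^{j}$ for $j<n$ play no role and may be chosen arbitrarily, so $\tilde\wp$ is a genuine strategy; this is the same ``irrelevant coordinates'' observation used in the proof of Lemma \ref{chainrule}.) Substituting $\tilde\wp$ into Definition \ref{condentropy}, its expected conditional loss is
$$\sum_{w\in\Sigma^n}P(w)\sum_{x\in\Sigma^m}P\{x\mid w\}\sum_{i=0}^{m-1}\lambda\!\left(x_i,\wp^{i}(x^{i-1}_0)\right),$$
since the predictions no longer depend on $w$.

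The crucial step is to marginalize the prefix away. Writing $P(w)P\{x\mid w\}=P(wx)$ and interchanging the two outer sums, the expression becomes
$$\sum_{x\in\Sigma^m}\left(\sum_{w\in\Sigma^n}P(wx)\right)\sum_{i=0}^{m-1}\lambda\!\left(x_i,\wp^{i}(x^{i-1}_0)\right).$$
Now $\sum_{w\in\Sigma^n}P(wx)=P(X^{n+m-1}_n=x)$ is the probability that the length-$m$ block sitting at position $n$ equals $x$, and by stationarity this block has the same distribution as the initial block, so $P(X^{n+m-1}_n=x)=P(X^{m-1}_0=x)=P(x)$. Hence the conditional loss of $\tilde\wp$ equals $\sum_{x\in\Sigma^m}P(x)\,\text{Loss}(x,\wp)$, which is precisely the unconditional loss of $\wp$.

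Finally, because $H_{m/n}$ is an infimum over all conditional strategies, the conditional loss of $\tilde\wp$ bounds it from above; thus $H_{m/n}\le\sum_{x\in\Sigma^m}P(x)\,\text{Loss}(x,\wp)$ for every unconditional $\wp$, and taking the infimum over $\wp$ yields $H_{m/n}\le H_m$. I expect the only subtle point to be the marginalization: one must invoke shift-invariance for the length-$m$ window located at offset $n$ rather than at the origin, and verify that summing $P(wx)$ over all prefixes $w\in\Sigma^n$ genuinely recovers $P(x)$. Everything else is routine rearrangement of the sums.
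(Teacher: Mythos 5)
Your proof is correct and follows essentially the same route as the paper's: both lift a strategy for the unconditional game to a conditional strategy that ignores the length-$n$ prefix (the paper's lifting $f'^{i+n}(w\,w'^{i-1}_0)=f^i(w'^{i-1}_0)$ is exactly your $\tilde\wp$), and both rely on stationarity to identify the marginal law of the block at offset $n$ with $P(x)$. The only difference is cosmetic: the paper instantiates the lifting at a strategy attaining $H_m$ (invoking compactness of $\Gamma$ and continuity of $\lambda$) and treats $m=1$ first with the general case by induction, whereas you lift an arbitrary strategy, handle general $m$ directly, and take infima at the end, which sidesteps the attainment step.
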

\begin{proof}
  The following proof is for $m=1$. In this special case 
$H_1=\displaystyle\inf_{\gamma\in\Gamma}\sum_{a\in\Sigma}P(a)\lambda(a,\gamma)$
and
\begin{align*}
H_{1|n} = \displaystyle\inf_{f\in
  \Pi^n}\sum_{w\in\Sigma^n}P(w)\sum_{a\in\Sigma}P\{a\mid
w\}\lambda(a,f(w)) = \displaystyle\inf_{f\in
  \Pi^n}\sum_{a\in\Sigma}P(a)\sum_{w\in\Sigma^n}P\{w\mid
a\}\lambda(a,f(w))
\end{align*}
Now pick the $\gamma\in \Gamma$ which matches $H_{1}$. We can do this
because regularity condition of game requires $\Gamma$ to be compact.
The loss function is continuous in both its arguments ensuring that
the expected loss in (\ref{eqn:nentropy}) is a continuous function on
a compact space. Now define $f':\Sigma^n\to \{\gamma\}$. Clearly,
$f'\in \Pi^n$.  So,
\begin{align*}
H_{1/n}  \le &
  \sum_{a\in\Sigma}P(a)\sum_{w\in\Sigma^n}P\{w\mid a\}\lambda(a,f'(w)) 
   = &
  \sum_{a\in\Sigma}P(a)\sum_{w\in\Sigma^n}P\{w\mid a\}\lambda(a,\gamma)\\ 
  = & \sum_{a\in\Sigma}P(a)\lambda(a,\gamma)  =  H_1
\end{align*}

The general case proceeds by induction by defining
$f'^{i+n}(w\ w'^{i-1}_0) = f^i(w'^{i-1}_0)$, where $w$ is an $n$-long
string and $1 \le i \le m$.
\end{proof}

In the special case of the log-loss game with a Bernoulli distribution
on the finite alphabet, the argument above yields a new argument for
the Shannon inequality.

\begin{lemma}
\label{shannonineq2}
For any regular game, any stationary distribution $P$ defined on it,
and any positive pair of natural numbers $m$ and $n$, $H_{m|n} \ge
H_{m|n+1}$.
\end{lemma}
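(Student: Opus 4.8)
The plan is to prove $H_{m|n+1}\le H_{m|n}$ by exhibiting, for each strategy $\wp$, a strategy $\wp'$ feasible for the $H_{m|n+1}$ problem whose expected conditional loss equals the $H_{m|n}$-value at $\wp$; taking the infimum over $\wp$ then yields the claim. Conceptually this is the assertion that conditioning on a longer history cannot increase the generalized entropy, and the strategy $\wp'$ I would construct is simply ``$\wp$ that ignores the extra conditioning symbol.''

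Concretely, I would write a string of length $n+1+m$ as $b\,w\,x$ with $b\in\Sigma$, $w\in\Sigma^n$, $x\in\Sigma^m$, so that in the $H_{m|n+1}$ problem the first $n+1$ symbols $bw$ form the given part and $x$ is predicted. Given $\wp$, I define $\wp'$ on the lengths that matter by ${\wp'}^{\,i+n+1}(b\,w\cdot x^{i-1}_0) = \wp^{\,i+n}(w\cdot x^{i-1}_0)$ for $0\le i\le m-1$, and arbitrarily on the remaining lengths (which do not enter the infimum). The input lengths are consistent, since $|b\,w\cdot x^{i-1}_0| = i+n+1$ and $|w\cdot x^{i-1}_0| = i+n$; thus $\wp'$ discards the leading symbol $b$ and predicts exactly as $\wp$ would on the truncated history.

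First I would substitute $\wp'$ into the $H_{m|n+1}$ objective $\sum_{bwx\in\Sigma^{n+1+m}} P(bwx)\sum_{i=0}^{m-1}\lambda(x_i,{\wp'}^{\,i+n+1}(b\,w\cdot x^{i-1}_0))$ and observe that, by construction, the inner double sum does not depend on $b$. Summing over $b$ first therefore collapses the weights to $\sum_{b\in\Sigma}P(bwx)$. The key step is to recognize $\sum_{b\in\Sigma}P(bwx) = P(X_1\cdots X_{n+m}=wx)$ and to invoke stationarity of $P$ to rewrite this as $P(X_0\cdots X_{n+m-1}=wx) = P(wx)$. After this the objective becomes exactly $\sum_{wx\in\Sigma^{n+m}}P(wx)\sum_{i=0}^{m-1}\lambda(x_i,\wp^{\,i+n}(w\cdot x^{i-1}_0))$, which is the $H_{m|n}$ objective evaluated at $\wp$.

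The remainder is bookkeeping: since the $H_{m|n+1}$-value at $\wp'$ equals the $H_{m|n}$-value at $\wp$, we obtain $H_{m|n+1}\le$ (the $H_{m|n}$-value at $\wp$) for every $\wp$, and taking the infimum over $\wp$ gives $H_{m|n+1}\le H_{m|n}$. I expect the only genuinely delicate point to be the index and position accounting: ensuring that ``ignoring the leading symbol'' aligns the predictor's inputs so that the shift-invariance $\sum_{b}P(bwx)=P(wx)$ is exactly the marginal identity supplied by stationarity. Notably, beyond stationarity the argument invokes no continuity or regularity hypotheses, and the infimum need not even be attained for the inequality to follow.
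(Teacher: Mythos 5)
Your proof is correct, and its engine is the same as the paper's: given a strategy for the shorter history, build one for the longer history that simply discards the oldest symbol, then collapse the sum over that symbol using stationarity, $\sum_{b\in\Sigma}P(bw)=P(w)$ (this is exactly the marginal identity the paper invokes, and you correctly note it is the one place stationarity is needed --- the leading-symbol sum, unlike the trailing-symbol sum, is not automatic). The difference is in scope: the paper carries out this computation only for $m=1$ and then asserts that ``the general case would follow from application of Lemma \ref{chainrule},'' which tacitly requires the conditional decomposition $H_{m|n}=\sum_{i=0}^{m-1}H_{1|n+i}$ so that the $m=1$ inequality can be applied term by term; that step is left unverified in the paper. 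You instead run the prepend-ignoring construction directly at the level of general $m$, defining ${\wp'}^{\,i+n+1}(b\,w\cdot x^{i-1}_0)=\wp^{\,i+n}(w\cdot x^{i-1}_0)$ for all $0\le i\le m-1$, so the full statement follows in one pass with no appeal to the chain rule. This buys a self-contained proof of the lemma as stated (and, as you observe, one that needs neither regularity, continuity, nor attainment of the infimum); what the paper's route buys is a shorter computation, at the cost of an implicit dependence on Lemma \ref{chainrule} whose details the reader must supply.
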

\begin{proof}
We prove the inequality for $m=1$. The general case would follow from
application of Lemma \ref{chainrule}.  We have,
\begin{align*}
H_{1|n} = \inf_{f\in
  \Pi^n}\sum_{w\in\Sigma^n}P(w)\sum_{a\in\Sigma}P\{a\mid
w\}\lambda(a,f(w)) = \displaystyle\inf_{f\in
  \Pi^n}\sum_{a\in\Sigma}\sum_{w\in\Sigma^n}P\{wa\}\lambda(a,f(w))
\end{align*}
and similarly $H_{1/n+1}=\displaystyle\inf_{f'\in
  F^{n+1}}\sum_{a\in\Sigma}\sum_{w\in\Sigma^{n+1}}P\{wa\}\lambda(a,f'(w))$.

We show for each $f\in \Pi^n$ we have a $f'\in F^{n+1}$ which matches
the inner quantity on which infimum is taken. Then, by taking infimum
over $F^{n+1}$,we would have $H_{1/k}\ge H_{1/k+1}$. Fix a $f\in \Pi^n$
and consider $f'\in F^{n+1}$ defined as $f'(bw)=f(w)$ for all
$w\in\Sigma^n,b\in\Sigma$. Now,
\begin{eqnarray*}
\displaystyle\sum_{a\in\Sigma}\sum_{w\in\Sigma^{n+1}}P\{wa\}\lambda(a,f'(w))
&=&
\sum_{a\in\Sigma}\sum_{b\in\Sigma}\sum_{w'\in\Sigma^{n}}P\{bw'a\}\lambda(a,f'(bw'))
\\ &=&
\sum_{a\in\Sigma}\sum_{w'\in\Sigma^{n}}\sum_{b\in\Sigma}P\{bw'a\}\lambda(a,f(w'))
\\ &=& \sum_{a\in\Sigma}\sum_{w'\in\Sigma^{n}}P\{w'a\}\lambda(a,f(w'))
\\
\end{eqnarray*}
where the last step follows from stationarity of $P$ (i.e,
$\sum_{b\in\Sigma}P\{bw\}=P\{w\}$ for all $w\in\Sigma^n$).
\end{proof}

\begin{theorem}
\label{thm:entropy_exists}
 For any regular game $\mathcal{G}$ and stationary $(\Sigma^\infty,
 \mathcal{F}, P)$,
 $\displaystyle\lim_{n\to\infty}\frac{H_n}{n}$ exists and is finite.
\end{theorem}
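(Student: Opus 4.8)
The plan is to extract the limit through a Cesàro-averaging argument resting on the two facts already in hand: the chain rule (Lemma~\ref{chainrule}) and the monotonicity of conditional entropy (Lemma~\ref{shannonineq2}). Intuitively, $H_n/n$ is an average of the per-step conditional entropies $H_{1|j}$, and these decrease to a finite limit; the average must then converge to the same limit.

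First I would telescope using the chain rule. Setting $m = n-1$ and the second index to $1$ in Lemma~\ref{chainrule} yields $H_n = H_{n-1} + H_{1|n-1}$, and unrolling this recurrence gives $H_n = H_1 + \sum_{k=2}^{n} H_{1|k-1}$. Writing $H_{1|0}$ as shorthand for $H_1$ (conditioning on the empty prefix $\Sigma^0$), this is the single sum $H_n = \sum_{j=0}^{n-1} H_{1|j}$, so that $H_n/n$ is exactly the arithmetic mean of $H_{1|0}, \dots, H_{1|n-1}$.

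Next I would show the summands converge. By Lemma~\ref{shannonineq2} the sequence $(H_{1|j})_{j \ge 1}$ is non-increasing, and since $\lambda \ge 0$ each term is at least $0$; a bounded monotone sequence converges, so $H_{1|j} \to L$ for some $L$ with $0 \le L$. For finiteness, the generalized Shannon inequality (Lemma~\ref{shannonineq}) gives $H_{1|j} \le H_1$, and $H_1 < \infty$ because regularity provides a prediction $\gamma$ with $\lambda(b,\gamma) < \infty$ for every $b \in \Sigma$; hence $L \le H_1 < \infty$. Applying the Cesàro mean theorem to $H_{1|j} \to L$ then gives $\frac{1}{n}\sum_{j=0}^{n-1} H_{1|j} = H_n/n \to L$, which is the desired conclusion.

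The argument is short because all the genuine work is carried by the lemmas of Sections~3 and~4; the one point deserving care is the bookkeeping at the base of the telescoping. Since Lemma~\ref{chainrule} is stated only for positive indices, I would invoke it solely for $m \ge 1$ and separately identify $H_{1|0}$ with $H_1$ by unwinding Definition~\ref{condentropy} at $m=0$ (conditioning on the single empty string). The leading boundary term then contributes only $H_1/n \to 0$ to the average and does not perturb the limiting value $L$. An equivalent route, if one prefers to avoid conditional entropies entirely, is to note that the chain rule together with Lemma~\ref{shannonineq} yields subadditivity $H_{m+n} \le H_m + H_n$, whence Fekete's lemma gives $\lim_n H_n/n = \inf_n H_n/n$, again finite because $0 \le H_n/n \le H_1$.
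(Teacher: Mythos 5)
Your proof is correct and follows essentially the same route as the paper's: the chain rule to write $H_n = \sum_{j=0}^{n-1} H_{1|j}$, monotonicity and non-negativity of $H_{1|j}$ to get a finite limit, and the Ces\`aro mean theorem to conclude. Your extra care at the base case ($H_{1|0} = H_1$) and the alternative Fekete/subadditivity remark are fine additions but do not change the substance of the argument.
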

\begin{proof}
  From the regularity condition, we get $H_1$ is finite. From Lemma
  \ref{chainrule}, it follows that $H_n=\sum_{i=0}^{n-1}H_{1|i}$.

  By Lemma \ref{shannonineq2}, $H_{1|k}\ge H_{1|(k+1)}$. Since
  entropies are non-negative, the sequence $\{H_{1|n}\}$ is a bounded,
  monotone decreasing sequence of reals. Hence, it has a limit which
  we denote by $H_{1|\infty}$. It also follows that $H_{1|\infty}$ is
  at most $H_1$.

So by Ces\`aro mean,
$\displaystyle\lim_{n\to\infty}\frac{H_n}{n}=$
$\lim_{n\to\infty}\frac{1}{n}\sum_{i=0}^{n-1}H_{1|i}=$
$\lim_{n\to\infty}H_{1|n}=H_{1|\infty}$. 
\end{proof}

\begin{definition}
  \label{entropy}
  Let $\mathcal{G}=(\Sigma^\infty,\Gamma,\lambda)$ be a regular game
  and $(\Sigma^\infty, \mathcal{F}, P)$ be a stationary
  distribution. Then The \emph{generalized entropy} of the game is
  defined as
  $$H=\displaystyle\lim_{n\to\infty}\frac{H_n}{n}.$$
\end{definition}

\section{A Shannon-McMillan-Breiman Theorem}

We now show that for regular games with a suitable restriction on the
loss functions, optimal processes exist and they attain the
generalized entropy rate of the stationary ergodic process. Our
approach to this result is through uniform integrability and the
Vitali Convergence theorem, which contrasts with the usual approach
using the Dominated Convergence Theorem. First, we define the notion
of a \emph{strongly regular game}, for which the result
holds. \footnote{Kalnishkan et al. \cite{KVV07} consider the notion of
  mixable games, which characterize regular games with optimality. In
  comparison, our conditions are based on integrability of the loss
  function.} We will derive two consequences of strong regularity,
\emph{viz.}
\begin{enumerate}
\item The existence of a limiting function for the loss function,
   $P$-almost everywhere.
\item The integrability of this limiting function
\end{enumerate}
We urilize these in the proof of the Shannon-McMillan-Breiman
Theorem. We conclude with two examples, illustrating that Theorem
\ref{smb} properly generalizes the classical Shannon-McMillan-Breiman
theorem.

\begin{definition}
Let $(\Omega, \mathcal{F}, P)$ be a probability space. A sequence of
functions $\{f_n\}^{\infty}_{n=1}$ is called \emph{uniformly
  integrable} if 
\begin{align}
\label{ui}
\lim_{\alpha \to \infty} \sup_{n} \int |f_n| I_{[|f_n| > \alpha]} dP =
0,
\end{align}
where $I_{[|f_n| > \alpha]}$ is the indicator function which is 1 at
points $\omega$ with $|f_n(\omega)| > \alpha$ and is 0 otherwise.
\end{definition}

If the sequence $\{f_n\}^{\infty}_{n=1}$ is uniformly integrable, then
for every $\epsilon>0$, and any large enough $\alpha$,
\begin{align}
\label{ui_unlimit}
\sup_n \int |f_n| dP \le \alpha + \epsilon
\end{align}
In addition to uniform integrability, we also need a continuity
requirement over the space of strategies. We now introduce this. The
next lemma characterizes $H_{1|n}$ in terms of the loss incurred by an
optimal strategy on $\Sigma^n$.

\begin{lemma}
  \label{localglobalswitch}
  \begin{eqnarray*}
    H_{1|n}  =  \inf_{f\in
      \Pi^n}\sum_{w\in\Sigma^n}P(w)\sum_{a\in\Sigma}P\{a\mid
    w\}\lambda(a,f(w)) = \sum_{w\in\Sigma^n}P(w)\inf_{f\in
      \Pi^n}\sum_{a\in\Sigma}P\{a\mid w\}\lambda(a,f(w))\\
  \end{eqnarray*}
\end{lemma}
\begin{proof}
Let $n$ be an arbitrary number. For any string $w$ of length $n$, $P(w)
\ge 0$, thus it follows that
\begin{align*}
\inf_{f\in \Pi^n}
\sum_{w\in\Sigma^n}P(w)\sum_{a\in\Sigma}P\{a\mid w\}\lambda(a,f(w))
    & \ge & 
\sum_{w\in\Sigma^n}P(w)\inf_{f\in \Pi^n}\sum_{a\in\Sigma}P\{a\mid
w\}\lambda(a,f(w)), 
\end{align*}
hence it suffices to prove that that the opposite inequality holds.

For each $n$-long string $w$, let $f_{w}$ be the function which
attains the infimum
$$\inf_{f\in \Pi^n}\sum_{a\in\Sigma}P\{a\mid w\}\lambda(a,f(w)).$$

Thus, the required expectation of infima can be written in terms of
these functions as
\begin{align*}
\sum_{w\in\Sigma^n}P(w)\inf_{f\in \Pi^n}\sum_{a\in\Sigma}P\{a\mid
w\}\lambda(a,f(w)) \;=\; \sum_{w \in \Sigma^n} P(w) \sum_{a \in
  \Sigma} P\{a\mid w\} \lambda(a,f_{w}(w)).
\end{align*}

We can now define a function $f: \Sigma^n \to \Sigma$ as 
$$f(w) = f_{w}(w), \qquad w\in\Sigma^n.$$ 
It is clear from the definition of the function that
\begin{align*}
\sum_{w\in\Sigma^n}P(w)\sum_{a\in\Sigma}P\{a\mid w\}\lambda(a,f(w))
      = 
\sum_{w\in\Sigma^n}P(w)\sum_{a\in\Sigma}P\{a\mid w\}\lambda(a,f_{w}(w)),
\end{align*}
which implies the desired inequality.
\end{proof}

Lemma \ref{localglobalswitch} lets us analyse loss incurred by some
``optimal'' strategy. From Lemma \ref{localglobalswitch}, we can see
given $w\in\Sigma^n$, optimal loss depends on the conditional
probability distribution $(P\{0\mid w\},P\{1\mid w\})$. Let
$s(P\{0\mid w\})$ be the strategy that gives optimal loss in
$H_{1|n}$.

Let us define the following functions on $\Sigma^\infty$.
\begin{align*}
g_k(\omega)&=\lambda(\omega_0, s(P\{0 \mid \omega^{-1}_{-k}\}))\\
g(\omega)&=\lambda(\omega_0,s(P\{0\mid w^{-1}_{-\infty}\})).
\end{align*}
So, 
$\text{Loss}(\omega^{n-1}_0,\wp_n) =
\displaystyle\sum_{k=0}^{n-1}g_k(T^k \omega)$. 

% Extension Field

\begin{definition}
A regular game is \emph{strongly regular} if
\begin{enumerate}
\item $s$ is a continuous function of the conditional probability.
\item For each natural number $N$, define $G_N : \Omega \to
  [0,\infty]$ by 
$$G_N(\omega) = \sup_{k \ge N} \left|g_k(\omega) - g(\omega)\right|.$$
We require that $\{G_N\}^{\infty}_{N=1}$ is a uniformly integrable
sequence.
\end{enumerate}
\end{definition}

First, we explain a consequence of condition (1). For a stationary
ergodic distribution $P$, $P\{0 \mid \omega^{-1}_{-k}\} \to P\{0 \mid
\omega^{-1}_{-\infty}\} $ as $k \to \infty$, and since $g_k$ is a
continuous function of the conditional distribution by condition (1),
we have that $g_k \to g$ as $k \to \infty$, $P$-almost everywhere.

We now elicit some consequences of our assumption of uniform
integrability. For uniformly integrable sequences of functions, their
limit function is integrable even in the absence of any dominating
function. This is known as the \emph{Vitali Convergence Theorem}
\cite{FollandRA}.

\begin{theorem}
\label{vc}
Let $(\Omega, \mathcal{F}, P)$ be a probability space. If
$\{f_n\}_{n=1}^{\infty}$ is a sequence of uniformly integrable
functions such that $f_n \to f$ $P$-almost everywhere, then $f$ is
integrable and 
$$\lim_{n \to \infty} \int |f_n - f| dP = 0.$$
\end{theorem}

Vitali Convergence of $\{G_N\}_{N=1}^{\infty}$ will be required in the
final part of the proof of Theorem \ref{smb}. We first show that
uniform integrability of $\{G_N\}_{N=1}^{\infty}$ yields the
integrability of the optimal loss.

\begin{lemma}
\label{limitintegrable}
For a strongly regular game and a stationary distribution $P$, 
$$\lim_{n \to \infty} \int g_n\; dP = \int \lim_{n \to \infty} g_n\;
dP = \int g\; dP.$$
\end{lemma}

\begin{proof}
We know that for each $n \in \N$,
$$\int|g_n|\;dP = \int g_n dP = H_{1|n},$$
which exists for regular games and stationary distributions. Now, for
every n,
$$\int |g_n|\;dP = \int|g-g_n-g|\; dP  \ge \int |g| dP - \int |g-g_n|
dP.$$
Hence we have
\begin{align}
\label{i_three}
H = \lim_{n \to \infty} \int |g_n|\;dP \ge \int |g|dP - \liminf_{n \to
  \infty} \int |g - g_n| dP.
\end{align}
By the uniform integrability of $\{G_N\}_{N=1}^{\infty}$, we have that
$$\lim_{n \to \infty} \int |g-g_n| dP = 0.$$
Thus, by (\ref{i_three}), we have $H \ge \int |g| dP$.   
\end{proof}

Using uniform integrability and the notion of continuity, we can
introduce the setting for our Shannon-McMillan-Breiman Theorem.

For the sake of convenience, in the following proof, we will consider
two-way infinite sequences. However, the same theorem holds for
one-way sequences as well (see Chapter 13 of \cite{Bill65}). We
briefly mention the formal correspondence.

Let $(X,\mathcal{B},\mu)$ be a measure space with $T$ being a measure
preserving transform, not necessarily invertible. We construct a
measure preserving system
$(\hat{X},\hat{\mathcal{B}},$ $\hat{\mu},\hat{T})$ as follows.
\begin{itemize}
\item Define $\hat{X} = \{(x_i)_{i \in \N} \mid x_i\in T^{-i} X,
  Tx_{i+1}=x_i \text{ for all } i \in \N \}$
\item Let $\pi_j : \hat{X}\to T^{-i} X$ be the projection function
  which projects $j^{th}$ co-ordinate of an element of $\hat{X}$, i.e,
  $\pi_j(x)=x_j$.  Construct a $\sigma$ algebra $\mathcal{B}'$
  generated by sets of the form $\pi_i^{-1} T^{-i} E$, for all $i\in
  \N$, and $E\in\mathcal{B}$.
\item Let $\hat{\mu}(\pi_i^{-1} T^{-i} E)=\mu(E)$ for all
  $E\in\mathcal{B}$.
\item Complete $\mathcal{B}'$ with respect to $\hat{\mu}$ to get
  $\hat{\mathcal{B}}$.
\item Define $\hat{T}:\hat{X}\to\hat{X}$ by $\hat{T}((x_i)_{i \in
  \N})=((Tx_i)_{i \in \N})$.
\end{itemize} 
Clearly, $\hat{T}$ is an invertible transform given by
$\hat{T}^{-1}(x_1,x_2,x_3,\cdots)=(x_2,x_3,x_4,\cdots)$. Since $T$ is
measure preserving, $\hat{T}$ is also measure
preserving. $(\hat{X},\hat{\mathcal{B}},\hat{\mu},\hat{T})$ is called
\emph{natural extension of $(X,\mathcal{B},\mu,T )$}.  It is ergodic
iff the original system is ergodic. For unilateral alphebet system,
its natural extension has same entropy. For details, see Fact 4.3.2 of
\cite{entropyindynamicalsystem}.

\begin{theorem}
\label{smb}
For a strongly regular game $(\Sigma, \Gamma, \lambda)$, and
stationary ergodic distribution $(\Sigma^\infty, \mathcal{F}, P)$, let
H be the generalized entropy of the game. Moreover, let $\wp$ be a
strategy such that for every $n$, $\wp^n$ achieves $H_n$. Then for
$\omega \in\Omega$, the following holds:
\begin{align}
\label{eqn:smb}
\lim_{n\to\infty}\frac{\text{Loss}(\omega^{n-1}_0,\wp^n)}{n}=H
\end{align}
for $P$-almost every $\omega$.
\end{theorem}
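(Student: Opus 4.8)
The plan is to read (\ref{eqn:smb}) as a Shannon--McMillan--Breiman statement and reduce it to Birkhoff's ergodic theorem. Using the identity $\text{Loss}(\omega^{n-1}_0,\wp_n)=\sum_{k=0}^{n-1}g_k(T^k\omega)$ recorded above, I would split the normalized loss as
$$\frac{1}{n}\sum_{k=0}^{n-1}g_k(T^k\omega)=\frac{1}{n}\sum_{k=0}^{n-1}g(T^k\omega)+\frac{1}{n}\sum_{k=0}^{n-1}\bigl(g_k-g\bigr)(T^k\omega),$$
so the argument separates into a \emph{main term} built from the single limiting function $g$ and an \emph{error term} measuring the discrepancy between the time-varying losses $g_k$ and their limit $g$.

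For the main term I would invoke Birkhoff's ergodic theorem. Since $P$ is stationary and ergodic and, by Lemma \ref{limitintegrable}, $g$ is integrable with $\int|g|\,dP\le H<\infty$, Birkhoff's theorem yields $\frac{1}{n}\sum_{k=0}^{n-1}g(T^k\omega)\to\int g\,dP$ for $P$-almost every $\omega$; ergodicity is what makes the limit the constant $\int g\,dP$ rather than a conditional expectation. It remains to identify this limit with $H$: since $\int g_n\,dP=H_{1|n}$ and Lemma \ref{limitintegrable} gives $\int g_n\,dP\to\int g\,dP$, I obtain $\int g\,dP=\lim_n H_{1|n}=H_{1|\infty}=H$ by Theorem \ref{thm:entropy_exists}.

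The crux is controlling the error term, and this is where strong regularity is used in full. Fixing $N$, I would split the error sum at index $N$. The initial block $\frac{1}{n}\sum_{k=0}^{N-1}|g_k-g|(T^k\omega)$ is a fixed finite sum of almost-everywhere finite terms, hence vanishes as $n\to\infty$ for each fixed $N$. For the tail, the definition $G_N(\omega)=\sup_{k\ge N}|g_k(\omega)-g(\omega)|$ gives the pointwise bound $|g_k-g|(T^k\omega)\le G_N(T^k\omega)$ whenever $k\ge N$, so $\frac{1}{n}\sum_{k=N}^{n-1}|g_k-g|(T^k\omega)\le\frac{1}{n}\sum_{k=0}^{n-1}G_N(T^k\omega)$. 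Applying Birkhoff's theorem once more to $G_N$, which is integrable by uniform integrability via (\ref{ui_unlimit}), bounds the $\limsup_{n\to\infty}$ of the tail by $\int G_N\,dP$. Hence the $\limsup_{n\to\infty}$ of the whole error term is at most $\int G_N\,dP$ for every $N$.

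Finally I would let $N\to\infty$. Because $g_k\to g$ almost everywhere, the decreasing sequence $G_N$ tends to $0$ almost everywhere, and uniform integrability of $\{G_N\}$ together with the Vitali Convergence Theorem (Theorem \ref{vc}) forces $\int G_N\,dP\to 0$. Thus the error term converges to $0$ almost everywhere, and combining with the main term establishes (\ref{eqn:smb}). I expect the main obstacle to be precisely this error term: one cannot apply the ergodic theorem directly because the summands $g_k$ change with $k$, and the uniform-integrability hypothesis packaged into $G_N$ is exactly what lets a single uniform bound $\int G_N\,dP$ absorb that variation before sending $N\to\infty$.
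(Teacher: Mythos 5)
Your proposal is correct and follows essentially the same route as the paper's proof: the same decomposition into a Birkhoff average of the limiting function $g$ plus an error term, the same identification $\int g\,dP=\lim_n H_{1|n}=H$, and the same use of Birkhoff's theorem on $G_N$ followed by Vitali convergence to send $\int G_N\,dP\to 0$. If anything, your explicit split of the error sum at index $N$ (so that the bound $|g_k-g|\le G_N$ is only invoked for $k\ge N$) is slightly more careful than the paper's presentation, which applies that bound to the whole sum; this is a cosmetic tightening, not a different argument.
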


We cannot use the Birkhoff's ergodic theorem (see for example,
\cite{Bill65}) directly to prove the above theorem, since the summands
in the Birkhoff average on the left of (\ref{eqn:smb}) depend in
general on $n$, and are not the same integrable function. We however
can use the convergence in conditional distributions ensured by a
stationary distribution, in conjunction with Birkhoff's ergodic
theorem to establish our result. 

\begin{proof}
Recall that $g_k \to g$ almost everywhere, and $\int g$ exists by
Lemma \ref{limitintegrable}. We know $\text{Loss}(\omega^{n-1}_0,
\wp^n) = g_n(\omega)$.

Since $T$ is measure preserving transformation, by change of variable,
$$\int_{\Omega}g_k(\omega)dP=\int_{\Omega}g_k(T^k\omega)dP=H_{1|k}.$$
Thus
$$\int g(w) dP = \lim_{n\to\infty} \int g_n(w) dP = \lim_{n\to\infty}
H_{1|n}=H.$$

By the Ergodic theorem, we get 
$$\lim_{n\to\infty}\frac{1}{n}\sum_{k=0}^{n-1}g(T^kw)=\int g(w) dP =
  H,$$
for $P$-almost every $\omega \in \Omega$.`

Now,
$$\frac{1}{n}\sum_{k=0}^{n-1}g_k(T^kw) =
\frac{1}{n}\sum_{k=0}^{n-1}g(T^kw) +
\frac{1}{n}\sum_{k=0}^{n-1}(g_k(T^kw)-g(T^kw)).$$ where the first term
tends to $H$ as $n \to \infty$. If we show second term in the previous
equation is tends to 0 a.e. as $n \to \infty$, we are done.

Define $G_N(w)=\sup_{k\ge N} | g_k(w)-g(w)|$. By the assumption of
strong regularity, the sequence of functions $\{G_N\}^{\infty}_{N=1}$
is uniformly integrable. Also, since $g_n \to g$ $P$-a.e., we know
that $G_N \to 0$ $P$-almost everywhere as $N \to \infty$. By the
Vitali Convergence Theorem,
$$\lim_{N \to \infty} \int G_N\; dP = \int \lim_{N \to \infty} G_N\; dP =
0.$$

Now for each $N$,
\begin{align*}
  \displaystyle \limsup_{n\to\infty}
  \left|\frac{1}{n}\sum_{k=0}^{n-1}(g_k(T^k\omega)-g(T^k\omega)\right|
  &\le
  \limsup_{n\to\infty}\frac{1}{n}\sum_{k=0}^{n-1}|g_k(T^k\omega)-g(T^k\omega)|\\ &\le
  \limsup_{n\to\infty}\frac{1}{n}\sum_{k=0}^{n-1}G_N(T^k\omega) = \int
  G_N(\omega) dP
\end{align*}
where the last equality follows from Birkhoff Ergodic Theorem. Note
that this holds for all values of $N$ and right side converges to $0$
a.e. as $N\to\infty$. Since the left side is non-negative, it is $0$
a.e. So, $\frac{1}{n}\sum_{k=0}^{n-1}(g_k(T^k\omega)-g(T^k\omega))\to
0$ as $n\to\infty$. This concludes the proof.
\end{proof}

Recall that the generalized entropy of the log-loss game is the
Shannon entropy.  We now show the square loss and the log-loss games
are strongly regular, thus establishing that we have a proper
generalization of the classical Shannon-McMillan-Breiman theorem.

\begin{example}
Log-loss Game. The loss function $\lambda: \{0,1\} \times [0,1] \to
[0,\infty]$ is defined by
$$\lambda(b, \gamma) = - \log (|b - \gamma|).$$

The optimal strategy is given by $s_k = P\{0 \mid \omega^{-1}_{-k}\}$,
which is a continuous function of the conditional probability.

We have that for any $N$,
\begin{align*}
\int \sup_{k \ge N} |g_k(\omega) - g(\omega)| dP &\le 
\int \sup_{n \ge 1} |g_n(\omega) - g(\omega)| dP
&\le \int \sup_{n \ge 1} |g_n(\omega)| + \int g dP.
\end{align*}

Hence to show that the sequence $\sup_{k \ge N} |g_k(\omega) -
g(\omega)|$ is uniformly integrable, it suffices to show that 
$$\int \sup_{n \ge 1} |g_n(\omega)| dP$$ is integrable. It is easy to
show that for a stationary distribution $P$ and any $r \in \R$,
$$P\{\omega \mid \sup_k |g_k(\omega)| \ge r\} \le 2 e^{-r},$$
from which the integrability of $\sup_k g_k$ follows. 

Thus $\sup_{k \ge N} |g_k - g|$, for $N=1,2, \dots$ forms a uniformly
integrable sequence of functions, and Theorem \ref{smb} holds for the
log-loss game.
\end{example}

\begin{example}
Square-loss game. The loss function in the square loss game 
$\lambda: \{0,1\} \times [0, 1] \to [0,1]$ defined by
\begin{align}
\lambda(b,\gamma) = (b - \gamma)^2.
\end{align}
The optimal strategy in the square-loss game is to pick $\gamma =
p\{1 \mid \omega^{-1}_{-k}\} $, which is continuous in the conditional
probability. 

This loss function is bounded, hence 
$$\int \sup_{k \ge 1} \left| g_n(\omega) - g(\omega) \right| dP \le 
  \int 1 dP = 1,$$
ensuring that $G_N = \sup_{k \ge N} |g_k(\omega) - g(\omega)|$ is
uniformly integrable. Thus Theorem \ref{smb} holds for the square-loss
game. 
\end{example}

\section{Predictive Complexity of Stationary Ergodic Games}
\newcommand{\sscr}{\mathcal{S}}

We now consider computable prediction strategies. We would like to
define the inherent unpredictability of a string $x$ as the
performance of an optimal computable predictor on $x$. It is not clear
that one such predictor exists for any game. The work of Vovk and
Watkins\cite{VW98} establishes a sufficient condition for predictive
complexity to exist.

\begin{definition}
A pair of points $(s_0, s_1) \in (-\infty, \infty]^2$ is called a
  \emph{superscore}\footnote{In \cite{KVV04}, \cite{KVV07}, the
    concept is called a superprediction.} if there is a prediction
  $\gamma \in \Gamma$ such that $ \lambda(0, \gamma) \le s_0 \text{
    and } \lambda(1, \gamma) \le s_1$.  We denote the set of
  superscores for a regular game $\mathcal{G}$ by $\sscr$.
\end{definition}

\begin{definition}
A prediction strategy $\wp: \Sigma^* \to (-\infty, \infty]$ is called
  a \emph{superloss process} if the following conditions hold.
\begin{enumerate}
\item $\wp(\Lambda) = 0$
\item For every string $x$, the pair $(\wp(x0) - \wp(x), \wp(x1) -
  \wp(x))$ is a superscore with respect to the game.
\item $\wp$ is upper semicomputable.
\end{enumerate}
\end{definition}

A superloss process $K$ is \emph{universal} if for any superloss
process $\wp$ there is a constant $C$ such that for every string $x$,
$$ K(x) \le \wp(x) + C.$$
It follows that the difference in loss between any two superloss
processes is bounded by a constant. Hence we may pick a particular
superloss process $\mathcal{K}$ and call $\mathcal{K}(x)$ the
\emph{predictive complexity} of the string $x$ with respect to the
game $\mathcal{G}$.

When we consider regular games, it is not necessary that an optimal
strategy exists on $\Sigma^*$ which incurs at most an additive loss
when compared to any other prediction process. However, Vovk
\cite{Vovk98} and Vovk and Watkins\cite{VW98} introduced the concept
of \emph{mixability} to ensure that one such universal process exists.

\begin{definition}
Let $\beta \in (0,1)$. Consider the homeomorphism $h_\beta : (-\infty,
\infty]^2 \to [0, \infty)^2$ specified by
$h_\beta(x,y) = (\beta^x, \beta^y)$.
A regular game $\mathcal{G}$ with set of superscores $\sscr$ is called
\emph{$\beta$-mixable} if the set $h_\beta(\sscr)$ is convex. A game
$\mathcal{G}$ is called \emph{mixable} if it is $\beta$-mixable for
some $\beta \in (0,1)$.
\end{definition}

\begin{theorem} \cite{VW98}
If a game $\mathcal{G}$ with set of superscores $\sscr$ is
mixable, then $\mathcal{G}$ has a predictive complexity.
\end{theorem}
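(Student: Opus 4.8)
The plan is to realize the predictive complexity as a single universal superloss process obtained by \emph{aggregating} all superloss processes, in the spirit of Vovk's Aggregating Algorithm. Fix $\beta \in (0,1)$ witnessing $\beta$-mixability, so that $h_\beta(\sscr)$ is convex. First I would fix an effective enumeration $\wp_1, \wp_2, \dots$ of all superloss processes together with positive computable weights $w_1, w_2, \dots$ with $\sum_i w_i = 1$ (for instance a renormalized $w_i = 2^{-i}$). Recalling $h_\beta(x,y) = (\beta^x,\beta^y)$, I would define the candidate
$$K(x) = \log_\beta \sum_{i} w_i\, \beta^{\wp_i(x)}.$$
Since $\beta \in (0,1)$, the map $t \mapsto \beta^t$ is decreasing, so after applying $\log_\beta$ a weighted mixture of the scores $\beta^{\wp_i(x)}$ becomes a loss that trades off favorably against each $\wp_i$.

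The universality bound is then immediate. For any index $j$ we have $\sum_i w_i \beta^{\wp_i(x)} \ge w_j \beta^{\wp_j(x)}$, and since $\log_\beta$ reverses inequalities (as $\beta < 1$),
$$K(x) \le \log_\beta\!\big(w_j \beta^{\wp_j(x)}\big) = \wp_j(x) + \log_\beta w_j.$$
Thus $K(x) \le \wp_j(x) + C_j$ with $C_j = \log_\beta w_j \ge 0$, so $K$ dominates every superloss process up to an additive constant, which is exactly universality.

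It remains to verify that $K$ is itself a superloss process. The normalization $\sum_i w_i = 1$ gives $K(\Lambda) = \log_\beta \sum_i w_i \beta^{0} = 0$. For upper semicomputability, each $\wp_i$ is upper semicomputable, so $\beta^{\wp_i(x)}$ is lower semicomputable; the series $\sum_i w_i \beta^{\wp_i(x)}$ is then lower semicomputable (approximate from below by partial sums), and composition with the decreasing continuous map $\log_\beta$ yields an upper semicomputable $K$. The crucial step is that the increment pair $(K(x0)-K(x),\,K(x1)-K(x))$ is a superscore. Setting $w_i^x = w_i \beta^{\wp_i(x)} / \sum_j w_j \beta^{\wp_j(x)}$, so that $\sum_i w_i^x = 1$, a direct cancellation gives
$$K(xb) - K(x) = \log_\beta \sum_i w_i^x\, \beta^{\,\wp_i(xb)-\wp_i(x)} \qquad (b \in \{0,1\}).$$
By condition (2) applied to each $\wp_i$, the pair $(\wp_i(x0)-\wp_i(x),\,\wp_i(x1)-\wp_i(x))$ is a superscore, so its $h_\beta$-image lies in $h_\beta(\sscr)$. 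Hence $\big(\sum_i w_i^x \beta^{\wp_i(x0)-\wp_i(x)},\ \sum_i w_i^x \beta^{\wp_i(x1)-\wp_i(x)}\big)$ is a convex combination of points of $h_\beta(\sscr)$, and $\beta$-mixability says $h_\beta(\sscr)$ is convex, so this combination again lies in $h_\beta(\sscr)$; applying $h_\beta^{-1}$ shows the increment pair is a superscore. Taking $\mathcal{K}=K$ then establishes that $\mathcal{G}$ has a predictive complexity.

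The main obstacle is not the convexity step above, which is clean, but the bookkeeping that legitimizes the enumeration. One must produce a computable list containing a dominating copy of every superloss process while guaranteeing that \emph{each} listed function genuinely has superscore-valued increments, since the mixability argument collapses if any mixed function violates condition (2). The standard device is to begin from an enumeration of all upper semicomputable functions and project each onto the superscore constraints, forcing a valid superloss process whose losses never exceed those of the intended one; I would carry this out and check that the projection preserves upper semicomputability and the required domination. Finally, if the construction only yields $\sum_i w_i \le 1$ rather than equality, $K(\Lambda)$ is a nonnegative constant that can be subtracted off without disturbing any superscore or domination property.
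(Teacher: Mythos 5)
The paper offers no proof of this statement at all --- it is imported directly from Vovk and Watkins \cite{VW98} --- so the only meaningful comparison is with the standard argument from that source, which your proposal reconstructs essentially exactly: the $\log_\beta$-mixture of all superloss processes is the Aggregating Algorithm construction, and your universality bound $K(x) \le \wp_j(x) + \log_\beta w_j$, the posterior-weight computation of the increments, and the appeal to convexity of $h_\beta(\mathcal{S})$ are all the right steps. Two points keep the proposal from being a complete proof. First, your mixture is a \emph{countably infinite} convex combination, and convexity of $h_\beta(\mathcal{S})$ by itself only closes finite combinations; you also need $h_\beta(\mathcal{S})$ to be closed. That is true here --- since $\Gamma$ is compact and $\lambda$ continuous, the superscore set is the upward closure of a compact loss curve, hence closed, and since superscores have nonnegative coordinates the increment mixture converges inside $[0,1]^2$ --- but it has to be stated and used, otherwise the key step ``the combination again lies in $h_\beta(\mathcal{S})$'' does not follow. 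Second, and more seriously, the step you explicitly defer (the enumeration) is precisely where the technical content of \cite{VW98} lives: being a superloss process is not a semidecidable property, so one cannot enumerate superloss processes directly; one must enumerate all upper semicomputable processes and effectively ``repair'' each into a valid superloss process, without increasing those that already are superloss processes, and this repair requires an effectivity assumption on $\lambda$ (equivalently on $\mathcal{S}$) that the paper's definition of a regular game never imposes. You name the correct device, but until that projection is constructed and verified, what you have is a correct and faithful outline of the known proof rather than a proof.
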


It is known that the logloss and the square loss games are
mixable. The coincidence of logloss and Kolmogorov complexity enables
us to view predictive complexity as a generalization of predictive
complexity. Absolute loss game is known not to be mixable
\cite{Vyugin02}.

We mention a loss bound which holds for mixable games. This is used in
the proof of the theorem which follows.

\begin{lemma}\cite{KVV04}
\label{loss_bound}
If $\K$ is predictive complexity of a mixable game $\mathcal{G}$, then
there is a positive constant c such that $|\K(xb)-\K(x)|\le c \ln n$
for all $n=1,2,\cdots$, strings $x$ and bits $b$.
\end{lemma}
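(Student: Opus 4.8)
Writing $n=|xb|=|x|+1$, I would prove the two one-sided bounds $\K(x)-\K(xb)\le c\ln n$ and $\K(xb)-\K(x)\le c\ln n$ separately. The first is immediate: since $\K$ is a superloss process, the pair $(\K(x0)-\K(x),\,\K(x1)-\K(x))$ is a superscore, so some $\gamma\in\Gamma$ satisfies $\lambda(0,\gamma)\le\K(x0)-\K(x)$ and $\lambda(1,\gamma)\le\K(x1)-\K(x)$. As $\lambda$ takes values in $[0,\infty]$, both increments are non-negative, so $\K(xb)\ge\K(x)$ and hence $\K(x)-\K(xb)\le 0\le c\ln n$. All the content is therefore in the upper bound on the increment $\K(xb)-\K(x)$.

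For that bound I would use the explicit form of the universal process that underlies the Vovk--Watkins existence theorem for mixable games. Enumerate all upper semicomputable superloss processes as $\wp_1,\wp_2,\dots$ with weights $w_i$, and recall that $\beta$-mixability---the convexity of $h_\beta(\sscr)$---is exactly what makes the mixture defined by $\beta^{\K(y)}=\sum_i w_i\,\beta^{\wp_i(y)}$ a superloss process again. From $\beta^{\K(y)}\ge w_i\,\beta^{\wp_i(y)}$ and the fact that $t\mapsto\beta^t$ is decreasing, one reads off universality in the sharp form $\K(y)\le\wp_i(y)+C_i$, where the additive constant $C_i=\log_2(1/w_i)\,/\,\log_2(1/\beta)$ is, up to the fixed factor $1/\log_2(1/\beta)$, just the description length $\log_2(1/w_i)$ of the $i$-th competitor. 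The point to exploit is that a competitor describable in $O(\log n)$ bits has universality constant $O(\ln n)$.

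The competitor I would feed in is a process $\wp^{(n)}$ that imitates $\K$ up to level $n-1$ and then plays safe. Using regularity, fix a prediction $\gamma_0$ with $M:=\max_{b\in\Sigma}\lambda(b,\gamma_0)<\infty$. Set $\wp^{(n)}(y)=\K(y)$ whenever $|y|\le n-1$, and for longer $y$ let $\wp^{(n)}$ continue from the value at the length-$(n-1)$ prefix of $y$ by predicting $\gamma_0$ at every further step. This $\wp^{(n)}$ is again an upper semicomputable superloss process: its increments coincide with those of $\K$ (hence are superscores) below level $n-1$ and equal the achievable superscore $(\lambda(0,\gamma_0),\lambda(1,\gamma_0))$ from level $n-1$ on. Moreover it is produced by one fixed program---the program for $\K$ together with the safe-play rule---reading only the switch-point $n$ as a parameter, so it sits in the enumeration with weight $w=n^{-O(1)}$ and universality constant $C=O(\ln n)$. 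Since $|x|=n-1$ we have $\wp^{(n)}(x)=\K(x)$ and $\wp^{(n)}(xb)=\K(x)+\lambda(b,\gamma_0)\le\K(x)+M$, whence $\K(xb)\le\wp^{(n)}(xb)+C\le\K(x)+M+O(\ln n)$. Combined with the trivial direction this yields $|\K(xb)-\K(x)|\le c\ln n$ for a suitable constant $c$.

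The step I expect to be the main obstacle is precisely the description-length accounting that keeps the universality constant at $O(\ln n)$: one must check that $\wp^{(n)}$ is emitted by a single program uniform in $n$, so that the $O(1)$ part of its description does not secretly depend on $n$ and only the self-delimiting encoding of the switch-point contributes the logarithmic term. This is also where mixability is indispensable---it is the convexity of $h_\beta(\sscr)$ that both makes the mixture a legitimate superloss process and converts a competitor's weight directly into its universality constant; with only an abstract universal process the constants would carry no usable dependence on $n$.
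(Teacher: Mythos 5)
The paper never proves this lemma --- it is imported from \cite{KVV04} as a black box --- so there is no internal proof to compare against; judged on its own merits, your argument is essentially the standard proof (and, in outline, the one behind the cited result), and it is sound. The easy direction is right: every increment pair of a superloss process dominates some $(\lambda(0,\gamma),\lambda(1,\gamma))$ with $\lambda\ge 0$, so $\K$ is nondecreasing along prefixes. For the substantive direction, the ``follow $\K$ up to time $n-1$, then play a fixed finite-loss prediction'' competitor $\wp^{(n)}$, combined with description-length weighting in the aggregating mixture, is exactly the right idea, and you correctly locate the crux: abstract universality alone gives a constant $C_n$ with no control on its growth in $n$, so the $O(\ln n)$ must come from the weights. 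Two refinements would make this airtight. First, the lemma concerns an \emph{arbitrary} predictive complexity $\K$, while you bound the increments of the particular mixture-built process; add the one-line transfer that any two predictive complexities differ by an additive constant, absorbed into $c\ln n$ for $n\ge 2$. Second --- and this removes any dependence on how the enumeration inside $\K$ assigns weights, which is the step you flagged as fragile --- you can do the mixing yourself: the family $\{\wp^{(n)}\}_{n\ge 1}$ is uniformly upper semicomputable, so $\wp^*:=\log_\beta\sum_{n\ge 1}\tfrac{1}{n(n+1)}\beta^{\wp^{(n)}}$ is a single upper semicomputable superloss process (here $\beta$-mixability, i.e.\ convexity of the $h_\beta$-image of the superscore set $\mathcal{S}$, together with its closedness and boundedness, justifies the countable mixture), and one application of abstract universality, $\K\le\wp^*+O(1)$, gives $\K(y)\le\wp^{(n)}(y)+\log_{1/\beta}\bigl(n(n+1)\bigr)+O(1)=\wp^{(n)}(y)+O(\ln n)$ without auditing $\K$'s internal weights. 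Finally, note an artifact of the statement rather than of your proof: at $n=1$ the right-hand side is $c\ln 1=0$, so the bound as written can only hold for $n\ge 2$ (or with $\ln n$ replaced by a shifted logarithm), and your bound $\K(xb)-\K(x)\le M+O(\ln n)$ is of exactly that form.
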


We can now show that for a strongly regular mixable game
$\mathcal{G}$, the predictive complexity rate on an infinite sequence
of outcomes attains the generalized entropy of the stationary ergodic
distribution $P$, almost everywhere.

\begin{theorem}
\label{conv_pred_comp}
  Let $\mathcal{G}=(\Omega,\Gamma,\lambda)$ be a strongly regular
  mixable game with predictive complexity $\mathcal{K}$. Let $(\Omega,
  \mathcal{F}, P)$ be the probability space over the outcomes where
  $P$ is a stationary ergodic distribution with generalized entropy
  $H$.  Then
    $$\lim_{n\to\infty}\frac{\mathcal{K}(\omega^{n-1}_0)}{n}=H,$$
for $P$-almost every $\omega \in \Omega$.
\end{theorem}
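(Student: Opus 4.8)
The plan is to establish the two one-sided bounds $\limsup_{n}\mathcal{K}(\omega^{n-1}_0)/n \le H$ and $\liminf_{n}\mathcal{K}(\omega^{n-1}_0)/n \ge H$ separately, for $P$-almost every $\omega$. The upper bound will come from the universality of $\mathcal{K}$ combined with our Shannon-McMillan-Breiman theorem (Theorem \ref{smb}); the lower bound will come from the fact that $\mathcal{K}$ is itself a superloss process, so that in expectation it cannot beat the $n$-step entropies $H_n$, a fact I then upgrade to an almost-everywhere statement using a martingale decomposition controlled by Lemma \ref{loss_bound}.

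For the upper bound, let $\wp$ be the strategy whose $n$-th component achieves $H_n$, exactly as in Theorem \ref{smb}. Its cumulative loss is a superloss process, since at each string the pair of incremental losses is by construction a superscore. As $\mathcal{K}$ is universal, there is a constant $C$ with $\mathcal{K}(\omega^{n-1}_0) \le \text{Loss}(\omega^{n-1}_0,\wp) + C$ for all $n$ and $\omega$. Dividing by $n$ and invoking Theorem \ref{smb}, which gives $\text{Loss}(\omega^{n-1}_0,\wp)/n \to H$ $P$-a.e., yields $\limsup_n \mathcal{K}(\omega^{n-1}_0)/n \le H$ a.e.

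For the lower bound, write $\mathcal{K}(\omega^{n-1}_0)=\sum_{k=0}^{n-1} d_k$ with increments $d_k = \mathcal{K}(\omega^{k}_0)-\mathcal{K}(\omega^{k-1}_0)$, and split $d_k = \bigl(d_k - \mathbb{E}[d_k\mid \mathcal{F}_{k-1}]\bigr) + \mathbb{E}[d_k\mid\mathcal{F}_{k-1}]$, where $\mathcal{F}_{k-1}$ is generated by $\omega^{k-1}_0$. Because $(\mathcal{K}(x0)-\mathcal{K}(x),\mathcal{K}(x1)-\mathcal{K}(x))$ is a superscore, the conditional mean satisfies $\mathbb{E}[d_k\mid\mathcal{F}_{k-1}] \ge h_k(\omega) := \inf_{\gamma\in\Gamma}\sum_{a\in\Sigma}P\{a\mid\omega^{k-1}_0\}\lambda(a,\gamma)$, and by Lemma \ref{localglobalswitch} together with the chain rule (Lemma \ref{chainrule}) we get $\mathbb{E}[h_k]=H_{1|k}$ and $\sum_{k<n}\mathbb{E}[h_k]=H_n$; in particular $\mathbb{E}[\mathcal{K}(\omega^{n-1}_0)]\ge H_n$. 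The centred terms form a martingale whose increments are $O(\log k)$ by Lemma \ref{loss_bound}, so $\frac1n\sum_{k<n}(d_k-\mathbb{E}[d_k\mid\mathcal{F}_{k-1}])\to 0$ a.e. by the martingale strong law. Finally $\frac1n\sum_{k<n} h_k \to H$ a.e.: the functions $h_k$ converge $P$-a.e. to a limit of integral $H_{1|\infty}=H$ (by convergence of the conditional distributions and continuity of $\inf_\gamma\sum_a(\cdot)\lambda(a,\gamma)$), and their Birkhoff-type averages converge to that integral by repeating the Vitali-convergence argument of Theorem \ref{smb} for the conditionally averaged losses. Combining the three facts gives $\liminf_n \mathcal{K}(\omega^{n-1}_0)/n \ge H$ a.e., and with the upper bound this proves the theorem.

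The main obstacle is precisely this upgrade of the expectation bound $\mathbb{E}[\mathcal{K}(\omega^{n-1}_0)]\ge H_n$ to an almost-sure lower bound: Fatou's lemma points the wrong way, so one genuinely needs the supermartingale structure of the increments of $\mathcal{K}$ together with the logarithmic increment bound of Lemma \ref{loss_bound}, which is exactly what renders the centred sums negligible after dividing by $n$. A secondary point requiring care is in the upper bound, where the optimal strategy $\wp$ uses the true conditional probabilities of $P$, which need not be computable; strictly one should replace $\wp$ by an upper semicomputable predictor that estimates these conditionals from the observed history, whose loss rate still converges to $H$ by the continuity supplied by strong regularity and the same SMB argument.
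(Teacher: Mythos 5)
Your overall architecture is the same as the paper's: the upper bound from universality of $\mathcal{K}$ combined with Theorem \ref{smb}, and the lower bound from the superscore property of the increments of $\mathcal{K}$ plus the martingale strong law of large numbers, with Lemma \ref{loss_bound} controlling the increments so that the centred sums are $o(n)$. Your lower bound is, if anything, more careful than the paper's: the paper writes $E(\eta_i\mid\omega^{i-1}_0)\ge H_{1|i}$ outright, whereas you correctly insert the pointwise conditional infimum $h_k$ (whose expectation is $H_{1|k}$ by Lemma \ref{localglobalswitch}) and then argue $\frac1n\sum_{k<n} h_k\to H$ a.e.\ by the same convergence-of-conditionals and Vitali machinery used in Theorem \ref{smb}; the paper instead jumps to the Ces\`aro convergence of the constants $H_{1|i}$ (Theorem \ref{thm:entropy_exists}). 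You also make explicit where Lemma \ref{loss_bound} enters (square-summability of $(\log k)^2/k^2$ for the martingale SLLN), which the paper only alludes to.

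The genuine problem is in your upper bound as primarily stated. The cumulative loss of the optimal strategy $\wp$ is \emph{not} a superloss process: a superloss process must be upper semicomputable (condition 3 of the definition), and $\wp$ predicts with $s(P\{0\mid w\})$, which depends on the true conditional probabilities of $P$ and is in general not computable. So universality of $\mathcal{K}$ gives you no inequality against $\text{Loss}(\cdot,\wp)$ directly. You do flag this, but calling it ``secondary'' understates it --- it is the only nontrivial point of the upper bound --- and the repair you sketch (an upper semicomputable predictor that \emph{estimates} the conditionals from the observed history, handled by ``the same SMB argument'') is substantially harder than you suggest: pointwise estimation of conditional probabilities of an unknown stationary ergodic process from its own past is a notoriously delicate problem, and Theorem \ref{smb} says nothing about strategies built from estimated conditionals. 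The paper's repair is more elementary and is the step you are missing: given $\epsilon>0$, choose a \emph{computable} strategy $\zeta$ whose per-step loss exceeds that of $\wp$ by at most $\epsilon/2$ (possible by continuity of $\lambda$ and compactness of $\Gamma$, e.g.\ nearby rational predictions), and chain $\mathcal{K}(\omega^{n-1}_0)\le \text{Loss}(\omega^{n-1}_0,\zeta)+O(1)\le \text{Loss}(\omega^{n-1}_0,\wp)+\tfrac{\epsilon n}{2}+O(1)$, after which Theorem \ref{smb} applies and $\epsilon$ is arbitrary. Equivalently, a finite-memory lookup table of rational predictions approximating $s(P\{0\mid w\})$ for the finitely many $w$ of a fixed length $k$ would do, with Birkhoff's theorem giving a loss rate at most $H_{1|k}+\epsilon/2$. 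With that substitution your proof goes through.
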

\begin{proof}

(A) Upper Bound: First we show that
  $\lim_{n\to\infty}\frac{\mathcal{K}(w^{n-1}_0)}{n}<H+\epsilon$
  for any $\epsilon>0$. This is an application of our
  Shannon-McMillan-Breiman theorem, Theorem \ref{smb} for generalized
  entropy.

Let $\wp_n$ be the strategy which achieves $H_{1|n}$. There is a
computable strategy $\zeta$ so that for all $0\le i\le n-1$,
$$\lambda(a,\zeta_i(w)) < \lambda(a,\wp_n^i(w))+\frac{\epsilon}{2}$$
for all $a\in \Sigma$ and for all $w \in \Sigma^{i}$. This is possible
since set of all such strategies constitute an open set. By the
definition of predictive complexity, we have
    \begin{eqnarray*} 
      \mathcal{K}(\omega^{n-1}_0) & \le &
      \text{Loss}(w^{n-1}_0,\zeta)+O(1)\\ & \le &
      \text{Loss}(w^{n-1}_0,\wp_n) + \frac{\epsilon n}{2}
      +O(1)\\
    \end{eqnarray*}
 By the Shannon-McMillan-Breiman Theorem, for large enough $n$,
\begin{align*}
      \text{Loss}(w^{n-1}_0,\wp_n) + \frac{\epsilon n}{2} +O(1)
      \;\le\; H + \epsilon O(n) + \frac{\epsilon n}{2} + O(1).\\
    \end{align*}
Taking limits as $n \to \infty$, we have that
 $$\lim_{n\to\infty}\frac{\mathcal{K}(w^{n-1}_0)}{n}<H+\epsilon.$$

(B) We now establish the reverse inequality,
$\lim_{n\to\infty}\frac{\mathcal{K}(\omega^{n-1}_0)}{n}>H-\epsilon$
for $\epsilon>0$.  Since
    $$(\K(\omega^{n-1}_0\cdot0)-\K({\omega^{n-1}_0}),\;\;\K(\omega^{n-1}_0\cdot
   1)-\K({\omega^{n-1}_0}))$$ is a superscore, we have
   $E(\eta_n|\omega^{n-1}_0)\ge H_{1|n}$ where $\eta_n=\K(\omega^{n-1}_0)-\K(\omega^{n-1}_0)$.

Now we can apply the martingale strong law of large numbers, Theorem
VII.5.4 of \cite{Shiryaev} and get 

\begin{eqnarray*}
\frac{\K(\omega^{n-1}_0)}{n} =&
\frac{1}{n}\sum_{i=0}^{n-1}\eta_i &=
\frac{1}{n}\sum_{i=0}^{n-1}E(\eta_i|\omega^{i-1}_0) + o(1)\\ 
\ge& \frac{1}{n}\sum_{i=0}^{n-1}H_{1|n} + o(1) &= H +o(1),
\end{eqnarray*}

where the last equality is obtained by Theorem
\ref{thm:entropy_exists}. 
\end{proof}

\section*{Acknowledgments} %Need to use command provided in style

The authors would like to thank John Hitchcock and Vladimir
  V'yugin for helpful discussions.

\bibliography{main,dim,random}

\end{document}